\begin{document}

\title{{\itshape Intrinsic Prices Of Risk}}

\author{TRUC LE}

\affil{}


\maketitle

\begin{abstract}
    We review the nature of some well-known phenomena such as volatility smiles, convexity adjustments and parallel derivative markets. We propose that the market is incomplete and postulate the existence of an intrinsic risk in every contingent claim as a basis for understanding these phenomena. In a continuous time framework, we bring together the notion of intrinsic risk and the theory of change of measures to derive a probability measure, namely risk-subjective measure, for evaluating contingent claims. This paper is a modest attempt to prove that measure of intrinsic risk is a crucial ingredient for explaining these phenomena, and in consequence proposes a new approach to pricing and hedging financial derivatives. By adapting theoretical knowledge to practical applications, we show that our approach is consistent and robust, compared with the standard risk-neutral approach.

    \begin{keywords}
    Implied volatility, convexity adjustment, primary and parallel markets, incomplete markets, intrinsic risk, risk-neutral measure, risk-subjective measure, fair valuation, delta-hedging.
    \end{keywords}
\end{abstract}

\section{Introduction}

This section has two purposes. Firstly, we review some well-known phenomena in order to motivate subsequent developments. After that, we provide a background of the phenomena with some notation, terminology and notions.

\subsection{Phenomena}

\textit{\textbf{Volatility smiles.}} In a nutshell, vanilla options with different maturities and strikes have different volatilities implied by the well-known formula of \cite{bs73}. Implied volatility is quoted as the market expectation about the average future volatility of the underlying asset over the remaining life of the option. Thus compared to historical volatility it is the forward looking approach.

For many years, practitioners and academics have tried to analyse the volatility smile phenomenon and understand its implications for derivatives pricing and risk management. In \cite{cr76}, their link between the real-world and risk-neutral processes of the underlying would be complete by non-traded sources of risk. \cite{scott87} found that the dynamics of the risk premium, when volatility is stochastic, is not a traded security. A number of models and extensions of, or alternatives to, the Black-Scholes model, have been proposed in the literature: the local volatility models of \cite{dupire94}, \cite{dk94}; a jump-diffusion model of \cite{mer76}; stochastic volatility models of \cite{hw88}, \cite{hes93} and others; mixed stochastic jump-diffusion models of \cite{bates96} and others; universal volatility models of \cite{dupire96}, \cite{jpm99}, \cite{bn00}, \cite{bla01} and others; regime switching models, etc.

From a hedging perspective, traders who use the Black-Scholes model must continuously change the volatility assumption in order to match market prices. Their hedge ratios change accordingly in an uncontrolled way: the models listed above bring some order into this chaos. In the course of time, the general consensus, as advocated by practitioners and academics, is to \emph{choose} a model that produces hedging strategies for both vanilla and exotic options resulting in profit and loss distributions that are sharply peaked at zero. We argue that such a model, if recovered (or implied) from option prices, by no means nearly explains this phenomenon, but is a means only to describe the implied volatility surface.

\textit{\textbf{Convexity adjustments.}} One of many well-known adjustments is the convexity adjustment - the implied yield of a futures and the equivalent forward rate agreement contracts are different. This phenomenon implies that market participants need to be paid more (or less) premium.

The common approach, as used by most practitioners and academics, is to adjust futures quotes such that they can be used as forward rates. Naturally, this approach depends on an model that is used for this purpose. For the extended Vasicek known as \cite{hw90} and \cite{cir85} model, explicit formulae can be derived. The situation is different for models whose continuous description gives the short rate a log-normal distribution such as the \cite{bdt90} and \cite{bk91} models: for these, in their analytical form of continuous evolution, futures prices can be shown to be positively infinite \cite{hjm92} and \cite{ss94}. In subsequent developments, we shall offer a different approach to this phenomenon.

\textit{\textbf{Parallel derivative markets.}}
In an economic system, a financial market consists of a risk-free money account, \emph{primary} and \emph{parallel} markets. Examples of primary markets are stocks and bonds, and examples of parallel markets are derivatives such as forward, futures, vanilla options, credits which are derived from the same primary asset. Market makers can trade and make prices for derivatives in a parallel market without references to another.

\subsection{Background}

The framework is as follows: a complete probability space $(\Omega,\mathcal{F},\mathbb{P})$ with a filtration $\mathcal{F}= \mathcal{F}(t)$ satisfying the usual conditions of right-continuity and completeness. $T\in\mathbb{R}$ denotes a fixed and finite time horizon; furthermore, we assume that $\mathcal{F}(0)$ is trivial and that $\mathcal{F}(T) = \mathcal{F}$. Let $X = X(t)$ be a continuous semimartingale representing the price process of a risky asset.

The absence of arbitrage opportunities implies the existence of an probability measure $\mathbb{Q}$ equivalent to the probability measure $\mathbb{P}$ (the real world probability), such that $X$ is a $\mathbb{Q}$-martingale. Denote by $\mathcal{Q}$ the set of coexistent equivalent measures $\mathbb{Q}$. A financial market is considered such that $\mathcal{Q} \neq \emptyset$. Uniqueness of the equivalent probability measure $\mathbb{Q}$ implies the market is \emph{complete}. The fundamental theorem of asset pricing establishes the relationship between the absence of arbitrage opportunities and the existence of an equivalent martingale measure and in a basic framework was proved by \cite{hk79}, \cite{hp81,hp83}. The modern version of this theorem, established by \cite{ds04}, states that the absence of arbitrage opportunities is ``essentially" equivalent to the existence of an equivalent martingale measure under which the discounted (primary asset) price process is a martingale.

For simplicity, we consider only one horizon of uncertainty $[0,T]$. A \emph{contingent claim}, or a \emph{derivative}, $H = H(\omega)$ is a payoff at time $T$, contingent on the scenario $\omega\in\Omega$. The derivative has the special form $H = h(X(T))$ for some function $h$. Here, $X$ is referred to as the primary (or the `underlying'). More generally, $H$ depends on the whole evolution of $X$ up to time $T$ and is a random variable
\begin{equation}
    H \in \mathcal{L}^2(\Omega,\mathcal{F},\mathbb{P}).
\end{equation}

In financial terms, every contingent claim can be replicated by means of a \emph{trading strategy} (or interchangeably known as \emph{hedging strategy} or a \emph{replication portfolio}) which is a portfolio consisting of the primary asset $X$ and a risk-free money account $D = D(t)$. Let $\alpha = \alpha(t)$ and $\beta =\beta(t)$ be a predictable process and an adapted process, respectively. $\alpha(t)$ and $\beta(t)$ are the amounts of asset and money account, respectively, held at time $t$. In this section, for ease of exposition, we assume that $D(t) = 1$ for all $0\le t\le T$. The \emph{value} of the portfolio at time $t$ is given by
\begin{equation}\label{portvalue}
    V(t) = \alpha(t)X(t) + \beta(t)D(t)
\end{equation}
for $0\le t\le T$. It can be shown that the trading strategy $(\alpha,\beta)$ is \emph{admissible} such that the value process $V = V(t)$ is square-integrable and have right-continuous paths and is defined by
\begin{equation}
    V(t) := V_0 + \int_0^t\alpha(s)dX(s)
\end{equation}
for $0\le t\le T$. For $\mathbb{Q}$-almost surely, every contingent claim $H$ is attainable and admits the following representation
\begin{equation}
    V(T) = H = V_0 + \int_0^T\alpha(s)dX(s),
\end{equation}
where $V_0 = E_{\mathbb{Q}}[H]$. Moreover, the strategy is \emph{self-financing}, that is the cost of the portfolio (also known as derivative price) is a constant $V_0$
\begin{equation}\label{perfecthedge}
    V(t) - \int_0^t\alpha(s)dX(s) = V_0.
\end{equation}
The constant value $V_0$ represents a perfect replication or a \emph{perfect hedge}.

Thus far, we have presented the well-known mathematical construction of a hedging strategy in a complete market where every contingent claim is attainable. In a complete market, derivative prices are unique - no arbitrage opportunities exist. Derivatives cannot be valuated in a parallel market at any price other than $V_0$.

From financial and economic point of view, the phenomena imply that the market is \emph{incomplete}, arbitrage opportunities exist and may not be at all eliminated. A derivative can be valued at different prices and hedged by mutually exclusively trading in risky assets (or derivatives) in parallel markets where market makers engage in market activities: investments, speculative trading, hedging, arbitrage and risk management. In addition, market makers expose themselves to market conditions such as liquidity, see for instance \cite{bac93}. We argue that exposure to the variability of market activities, market conditions and generally to uncertain future events constitutes a basis of arbitrage opportunity which we shall call \emph{intrinsic risk}.

In general, market incompleteness is a principle under which every contingent claim bears intrinsic risks. Let us postulate an assumption as a basis for subsequent reasonings and discussions.\\

\textbf{Assumption.} \textit{The market is incomplete and there exist intrinsic risks embedded in every contingent claim.}\\

While the assumption is theoretical, it is rather realistically a proposition with the phenomena as proof.

In a mathematical context, let $\Pi$ be the set of all intrinsic risks, that is the set of all real valued functions on $\Omega$. Denote by $G(\pi)$ the measure to the intrinsic risk $\pi = \pi(\omega)$ on the scenario $\omega\in\Omega$. As a measure of intrinsic risk, $G$ is a mapping from $\Pi$ into $\mathbb{R}$. As a basic object of our study, $G$ shall therefore be the random variable on the set of states of nature at a future date $T$. Generally, $G$ depends on the evolution of the primary asset up to time $T$ and may also depend on the contingent claim:
\begin{equation}
    G^H \in \mathcal{L}^2(\Omega,\mathcal{F},\mathbb{P}).
\end{equation}
The superscript indicates the dependence of a particular contingent claim $H$. This leads to a new representation of $H$
\begin{equation}\label{HGrepresenttation}
    H = V_0 + \int_0^T\alpha(s)dX(s) + G^H.
\end{equation}
We now introduce the Kunita-Watanabe decomposition
\begin{equation}
    G^H = G_0 + \int_0^T\alpha^H(s)dX(s) + N(T)
\end{equation}
where $N = N(t)$ is a square-integrable martingale orthogonal to $X$. Thus, we have
\begin{equation}\label{Hrep}
    H = V_0^* + \int_0^T\alpha^*(s)dX(s) + N(T),
\end{equation}
where $V_0^* = V_0 + G_0$ and $\alpha^* = \alpha + \alpha^H$. This representation of $H$ have been extensively dealt with, see for example \cite{fs91}. By incompleteness, the derivative value $V^*_0$ represents a perfect hedge, which manifests an initial intrinsic value of risk $G_0$. In relation to the hedging strategy (\ref{HGrepresenttation}), the measure of intrinsic risk shall be considered as the value of all possible future capital which, required to control the risk incurred by the market maker (such as hedger) and invested in the primary asset, makes not only the contingent claim acceptable, but its valuation fair.

From a mathematical point of view, market incompleteness implies that there exists in the set $\mathcal{Q}$ an equivalent measure, not necessarily a martingale and/or unique measure, that is assigned to a parallel market. Thus, intrinsic risk may depend on the derivative and is not necessarily unique, as such its measure takes many forms some of which we shall consider for applications. In the remaining of this paper, we shall not discuss further on the abstract representations (\ref{HGrepresenttation}) and (\ref{Hrep}), but present them in a more descriptive (down to earth) framework - the continuous time framework.

\section{Market, Portfolio, Absence of Arbitrage and Intrinsic Price of Risk}

In this section we propose a continuous time financial market consisting of a primary price process $X$ and a risk-free money account $D$. We shall define a measure of intrinsic risk and show that perfect hedging strategies can be constructed. We also show that the existence of intrinsic risk provides an internal consistency in pricing and hedging a contingent claim.

Let $B = B(t)$ be a Brownian motion on the complete probability space $(\Omega,\mathcal{F},\mathbb{P})$. The underlying price process of $X$ satisfies the SDE
\begin{equation}\label{XPsde}
    dX(t) = \mu(t)X(t)dt + \sigma(t)X(t)dB(t),
\end{equation}
where $\mu = \mu(t)$ and $\sigma = \sigma(t)$ are Lipschitz continuous functions so that a solution exists. $\mu$ and $\sigma$ can be functions of $X$. The price process of $D$ is given by
\begin{equation}
    dD(t) = \nu(t)D(t)dt,
\end{equation}
where $\nu = \nu(t)$ is a Lipschitz continuous function.

\smallskip
We expand the portfolio value process (\ref{portvalue}) as follows:
\begin{eqnarray}\label{Qhedgestrat}
    dV(t) &=& \alpha(t)dX(t) + \nu(t)\beta(t)D(t)dt \\
        &=& \alpha(t)\mu(t)X(t)dt + \alpha(t)\sigma(t)X(t)dB(t) + \nu(t)\left(V(t) - \alpha(t)X(t)\right)dt \nonumber\\
        &=& \nu(t)V(t)dt + \alpha(t)(\mu(t) - \nu(t))X(t)dt + \alpha(t)\sigma(t)X(t)dB(t)\nonumber \\
        &=& \nu(t)V(t)dt + \alpha(t)\sigma(t)X(t) \left[ \frac{\mu(t) - \nu(t)}{\sigma(t)}dt + dB(t)\right]\nonumber \\
        &=& \nu(t)V(t)dt + \alpha(t)\sigma(t)X(t) dW(t),\nonumber
\end{eqnarray}
where $W = W(t)$ is a $\mathbb{Q}$-Brownian motion and is defined by
\begin{equation}
    dW(t) = \lambda(t)dt + dB(t)
\end{equation}
and
\begin{equation*}
    \lambda(t) = \frac{\mu(t) - \nu(t)}{\sigma(t)}.
\end{equation*}
Here, $\mathbb{Q}$ is some martingale measure. Indeed, the theory of the Girsanov change of measure, see for example \cite{ks98}, shows that there exists such a martingale measure $\mathbb{Q}$ equivalent to $\mathbb{P}$ and which excludes arbitrage opportunities. More precisely, there exists a probability measure $\mathbb{Q} \ll \mathbb{P}$ such that
\begin{equation}
    \frac{d\mathbb{Q}}{d\mathbb{P}} \in \mathcal{L}^2(\Omega,\mathcal{F},\mathbb{P})
\end{equation}
and $X$ is a $\mathbb{Q}$-martingale. Such a martingale measure $\mathbb{Q}$ is determined by the right-continuous square-integrable martingale
\begin{equation*}
    \Lambda(t) = E_{\mathbb{P}}\left[ \left. \frac{d\mathbb{Q}}{d\mathbb{P}} \right| \mathcal{F}(t)\right]
\end{equation*}
for $0\le t\le T$. And explicitly
\begin{equation*}
    \Lambda(T) = \exp\left(-\int_0^T\lambda(t)dB(t) - \frac{1}{2}\int_0^T\lambda^2(t)dt\right)
\end{equation*}
and $\lambda$ satisfies Novikov's condition
\begin{equation*}
    E_{\mathbb{P}}\left[\exp\left( \frac{1}{2}\int_0^T\lambda^2(t)dt \right)\right] < \infty.
\end{equation*}
It is not hard to see that the price process $X$ under $\mathbb{Q}$ is given by
\begin{equation}\label{XQsde}
    dX(t) = \nu(t)X(t)dt + \sigma(t)X(t)dW(t).
\end{equation}

Note that the martingale measure $\mathbb{Q}$ and $\lambda$ are, \emph{if unique}, theoretically and practically well-known as the \emph{risk-neutral measure} and the \emph{market price of risk}, respectively. The risk-neutral valuation formula is given by
\begin{equation}\label{Qvalformula}
    V(t) = D(t)E_{\mathbb{Q}}\left[\left. \frac{1}{D(T)}H \right| \mathcal{F}(t) \right] = D(t)E_{\mathbb{Q}}\left[\left. \frac{1}{D(T)}h(X(T)) \right| \mathcal{F}(t) \right].
\end{equation}
The expectation is taken under the measure $\mathbb{Q}$.

It is important to note that in the risk-neutral world the essential theoretical assumptions are: (1) the \emph{true} price process (\ref{XPsde}) is correctly specified and (2) prices of derivatives $H$ are drawn from this price process, that is derivative prices are uniquely determined by formula (\ref{Qvalformula}). These assumptions, if not violated, lead to a complete market and the trading strategy (\ref{Qhedgestrat}) and the measure $\mathbb{Q}$ are unique. However, in practice as we argued earlier, these assumptions are strongly violated; as a result market completeness and uniqueness of derivative prices are no longer valid. That is $\mathbb{Q}$ is no longer risk-neutral, but only an equivalent measure in the set $\mathcal{Q}$.

 We now consider the representation (\ref{HGrepresenttation}) in a continuous time framework, the measure of intrinsic risk can be defined, without loss of generality, in terms of changes in values in a future time interval $[t, t+dt]$ as follows.

\textbf{Definition.} \textit{The measure of intrinsic risk in a time interval $dt$ is defined by $dG(t,T) = \zeta(t,T)X(t)dt$, where $\zeta = \zeta(t,T)$ is a continuous adapted process representing a rate of intrinsic risk.}

As was represented earlier in (\ref{HGrepresenttation}), the evolution of a trading strategy shall be adaptable to adjust for the measure of intrinsic risk which can be considered an additional/less capital required in a time interval $dt$, that is
\begin{eqnarray}\label{Oportmod}
    dV(t) &=& \alpha(t)(dX(t) + dG(t)) + \nu(t)\beta(t)D(t)dt \\
        &=& \alpha(t)(\mu(t) + \zeta(t,T))X(t)dt + \alpha(t)\sigma(t)X(t)dB(t) + \nu(t)\left(V(t) - \alpha(t)X(t)\right)dt \nonumber\\
        &=& \nu(t)V(t)dt + \alpha(t)(\mu(t) + \zeta(t,T) - \nu(t))X(t)dt + \alpha(t)\sigma(t)X(t)dB(t)\nonumber \\
        &=& \nu(t)V(t)dt + \alpha(t)\sigma(t)X(t) \left[ \frac{\mu(t) + \zeta(t,T) - \nu(t)}{\sigma(t)}dt + dB(t)\right]\nonumber \\
        &=& \nu(t)V(t)dt + \alpha(t)\sigma(t)X(t)dZ(t),\nonumber
\end{eqnarray}
where $Z = Z(t)$ is a $\mathbb{S}$-Brownian motion and is given by
\begin{equation}\label{ZObrownian}
    dZ(t) = \frac{\mu(t) + \zeta(t,T) - \nu(t)}{\sigma(t)}dt + dB(t) = \frac{\zeta(t,T)}{\sigma(t)}dt + dW(t)
\end{equation}
and $\mathbb{S}$ is a measure equivalent to $\mathbb{P}$. Thus, $\mathbb{S} \in \mathcal{Q}$. Analogously, $\zeta/\sigma$ is defined as an \emph{intrinsic price of risk}.

Under $\mathbb{S}$ measure, the price process of $X$ under $\mathbb{S}$ is given by
\begin{equation}\label{XOsde}
    dX(t) = (\nu(t) - \zeta(t,T)) X(t)dt + \sigma(t)X(t)dZ(t).
\end{equation}
Consequently the fair value of a contingent claim is given by the formula
\begin{equation}\label{Ovalformula}
    V(t) = D(t)E_{\mathbb{S}}\left[\left. \frac{1}{D(T)}H \right| \mathcal{F}(t) \right] = D(t)E_{\mathbb{S}}\left[\left. \frac{1}{D(T)}h(X(T)) \right| \mathcal{F}(t) \right].
\end{equation}

From a pragmatic standpoint, what is needed in determining prices of derivatives and managing their risks is to allow sources of uncertainty that are epistemic (or subjective) rather than aleatory in nature. In theory, the value of a derivative can be perfectly replicated by a combination of other derivatives provided that these derivatives are uniquely determined by the formula (\ref{Qvalformula}). In practice, prices of derivatives (such as futures, vanilla options) on the same primary asset are not determined by (\ref{Qvalformula}) from statistically or econometrically observed model (\ref{XPsde}), but made by individual market makers who, with little, if not at all, knowledge of the \emph{true} price process, have used their personal perception of the future. We argue further on this point as follows. If we let $Y = Y(t)$ be the price process of a derivative in a derivative market (such as futures in particular, $Y(t,T) = D(t)E_{\mathbb{S}}\left[\left.X(T)/D(T) \right| \mathcal{F}(t) \right]$, since its contract is not necessarily connected with a physical primary asset), $Y$ must have an abstract dynamics and is assumed to satisfy a SDE
\begin{equation}\label{YQsde}
    dY(t,T) = \nu(t)Y(t,T)dt + \bar{\sigma}(t,T)Y(t,T)dZ(t),
\end{equation}
where $T$ denotes a fixed time horizon larger than or equal to the maturity of any contingent claim, $\bar{\sigma}$ is a Lipschitz continuous function so that a solution exists. We now show that $Z$ is a $\mathbb{S}$-Brownian motion - the source of randomness that drives the derivative price process $Y$. We introduce a change of time, see for example \cite{kle12}. Let $U(t)$ be a positive function such that
\begin{equation*}
    U(t) = \int_0^t \frac{\bar{\sigma}^2(s,T)}{\sigma^2(s)}ds
\end{equation*}
which is finite for finite time $t \le T$ and increases almost surely. Define $\tau(t) = U^{-1}(t)$, let $Y$ be a replacement of $X$, i.e. $X(t) = Y(\tau(t),\tau(T))$ whose solution is given by
\begin{equation*}
    dX(t) = \frac{\nu(t)\sigma^2(t)}{\bar{\sigma}^2(t,T)}X(t)dt + \sigma(t)X(t)dZ(t)
\end{equation*}
with $X(0) = Y(0)$. Rearranging the drift term leads to
\begin{equation}\label{YOsde}
    dX(t) = (\nu(t) - \zeta(t,T))X(t)dt + \sigma(t)X(t)dZ(t),
\end{equation}
where
\begin{equation}\label{zetatimechge}
    \zeta(t,T) = \frac{\nu(t)}{\bar{\sigma}^2(t,T)} \left(\bar{\sigma}^2(t,T) - \sigma^2(t)\right).
\end{equation}
Here, we see the concurrence of the SDEs (\ref{XOsde}) and (\ref{YOsde}), the source of randomness $Z$ is the very $\mathbb{S}$-Brownian motion (\ref{ZObrownian}). We have just shown that the measure $\mathbb{S}$ is subjective in the sense that the valuation of a contingent claim is not only subjected to the dynamics of the primary asset price, but also subject to an exogenous measure of risk $\zeta$. We shall call the measure $\mathbb{S}$ the \emph{risk-subjective measure}. The connection between the risk-subjective measure and the risk-neutral measure described by (\ref{ZObrownian}) is far more precise than that found in \cite{jac00}.

An important note here is that the trading strategy (\ref{Oportmod}) is equivalent to the risk-free money account, that is the growth of portfolio value (\ref{portvalue}) is at the risk-free rate $\nu$. In terms of pricing and hedging, the presence of intrinsic risk imposes an internal consistency and implies that possible arbitrage exists in the market (the primary market and its associated derivative markets).

\section{Applications - pricing and hedging}\label{secapp}
In this section, we shall first discuss some problems related to asset models in parallel markets so as to provide some background for subsequent applications.

In the light of intrinsic risk, the SDE (\ref{YQsde}) in reality may represent a risky asset price process in parallel markets such as: (1) futures price process, or (2) an implied price process recovered from option prices where $\bar{\sigma}$ is the implied volatility. Attempts of recovering the implied price process were pioneered, for examples, by \cite{sch99}, \cite{cfd02}, \cite{let05}, \cite{cw10} and references therein.

Market makers indeed have dispensed with the correct specification (\ref{XPsde}) and directly use an implied price process as a tool to prescribe the dynamics of the implied volatility surface. A practice of recovering an implied price process from observed derivative prices (such as vanilla option prices) and use it to price derivatives is known as \emph{instrumental approach}, described in \cite{reb04}. A practical point that is more pertinent to the instrumental approach is that the prices of exotic derivatives are given by the price dynamics that can take into account or recover the volatility smile. With reference to intrinsic risk, an implied price process is a mis-specification for the primary asset, this was discussed in \cite{ejs98} and was shown that successful hedging depends entirely on the relationship between the mis-specified volatility $\bar{\sigma}$ and the true local volatility $\sigma$, and the total hedging error is given by, assuming zero risk-free rate,
\begin{equation}\label{hedgeerror}
    H - h(X(T)) = \frac{1}{2}\int_0^T X^2(t)\frac{\partial^2V}{\partial x^2} \left(\bar{\sigma}^2(t,T) - \sigma^2(t)\right)dt.
\end{equation}
Note that this hedge error resembles the term (\ref{zetatimechge}). Clearly, the hedging error is an intrinsic price of risk presented as traded asset in the hedging strategy (\ref{Oportmod}), but not in (\ref{Qhedgestrat}).

Before we illustrate a number of applications for pricing and hedging with specific form of the measure of intrinsic risk, let us state a general result for derivative valuation.

\subsection{Risk-subjective valuation}

We have established the risk-subjective valuation formula (\ref{Ovalformula}) where the risk-subjective price process is given by (\ref{XOsde}).
\begin{theorem}
    The risk-subjective value $V$ of a contingent claim $H = h(X(T))$ given by
    \begin{equation}
        V = V(X(t),t) = D(t)E_{\mathbb{S}}\left[\left. \frac{1}{D(T)}h(X(T)) \right| \mathcal{F}(t) \right]
    \end{equation}
    is a unique solution to
    \begin{equation}\label{risksubjectivePDE}
        \frac{\partial V}{\partial t}(x,t) + \frac{1}{2}\sigma^2(t)x^2\frac{\partial^2V}{\partial x^2} + (\nu(t) - \zeta(t,T))x\frac{\partial V}{\partial x}(x,t) = \nu(t) V(x,t),
    \end{equation}
    with $X(t) = x$ and $V(x,T) = h(x)$.
\end{theorem}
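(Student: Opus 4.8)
The plan is to run the standard Feynman--Kac argument adapted to the risk-subjective dynamics~(\ref{XOsde}). First I would introduce the discounted value $\tilde V(t) := V(t)/D(t)$ and observe that, by the tower property applied to~(\ref{Ovalformula}),
\begin{equation*}
    \tilde V(t) = E_{\mathbb{S}}\!\left[\left. \frac{1}{D(T)}h(X(T)) \,\right|\, \mathcal{F}(t)\right]
\end{equation*}
is a $\mathbb{S}$-martingale (it is in $\mathcal{L}^2$ by the standing assumption $H\in\mathcal{L}^2$ together with boundedness of $1/D$ on $[0,T]$). Next, assuming the Markovian representation $V(t)=V(X(t),t)$ with $V\in C^{2,1}$, I would apply It\^o's formula to $V(X(t),t)/D(t)$, using $dX(t)=(\nu(t)-\zeta(t,T))X(t)\,dt+\sigma(t)X(t)\,dZ(t)$ and $dD(t)=\nu(t)D(t)\,dt$. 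Collecting terms, the $dZ$ part is $\frac{1}{D(t)}\sigma(t)X(t)\frac{\partial V}{\partial x}\,dZ(t)$, and the $dt$ part is
\begin{equation*}
    \frac{1}{D(t)}\left[\frac{\partial V}{\partial t} + \tfrac12\sigma^2(t)x^2\frac{\partial^2 V}{\partial x^2} + (\nu(t)-\zeta(t,T))x\frac{\partial V}{\partial x} - \nu(t)V\right]\bigg|_{x=X(t)}dt.
\end{equation*}

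Since the left-hand side is a martingale, its finite-variation part must vanish identically along the paths of $X$; because $X(t)$ has full support in $(0,\infty)$ for each $t<T$, the bracketed expression must be zero for all $(x,t)$, which is exactly the PDE~(\ref{risksubjectivePDE}). The terminal condition $V(x,T)=h(x)$ is immediate from $V(X(T),T)=h(X(T))$ and continuity. For the converse/uniqueness direction I would take any $C^{2,1}$ solution $\tilde V$ of~(\ref{risksubjectivePDE}) with the stated terminal data, run the same It\^o computation in reverse to show $\tilde V(X(t),t)/D(t)$ is a local $\mathbb{S}$-martingale, and then invoke the usual polynomial-growth hypotheses on $h$, $\sigma$, $\nu$, $\zeta$ (so that the stochastic integral is a genuine martingale and $V(X(\cdot),\cdot)$ is uniformly integrable) to upgrade it to a true martingale; evaluating between $t$ and $T$ and taking $E_{\mathbb{S}}[\cdot\mid\mathcal{F}(t)]$ recovers the representation formula, forcing $\tilde V=V$.

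The main obstacle is not the It\^o bookkeeping but the regularity and integrability justifications: one needs either to \emph{assume} $V\in C^{2,1}$ or to derive it from parabolic PDE theory (e.g. H\"ormander/uniform-ellipticity conditions on $\sigma$, plus smoothness and growth bounds on $\nu$, $\zeta$, $h$), and one needs Novikov-type or linear-growth conditions to ensure the relevant local martingales are true martingales and that uniqueness holds in the appropriate class. I would state these as explicit standing hypotheses (polynomial growth of $h$, Lipschitz and bounded coefficients, $\sigma$ bounded away from zero on $[0,T]$) and cite a standard reference such as~\cite{ks98} for the Feynman--Kac representation and uniqueness, rather than re-proving the PDE-existence theory here.
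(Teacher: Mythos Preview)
Your proposal is correct and is precisely the Feynman--Kac argument the paper invokes; the paper's own proof is the single line ``The result is obtained by directly applying the Feynman--Kac formula.'' You have simply unpacked that citation (martingale property of the discounted value, It\^o expansion under the $\mathbb{S}$-dynamics, vanishing drift, and the converse verification for uniqueness), so there is no substantive difference in approach.
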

\begin{proof}
    The result is obtained by directly applying the Feynman-Kac formula.
\end{proof}

We have shown that the trading strategy (\ref{Oportmod}) yields the risk-free rate of return on the value of a derivative, and also the intrinsic risk is perfectly hedged by delta-hedging representation (\ref{Hrep}).

\subsection{Modelling measure of intrinsic risk}
As unpredictable as a market, prices in a parallel market (such as futures and corresponding vanilla options) may not be driven by the same source of randomness that drives the primary asset (such as stock and bond). Motivated by results (\ref{zetatimechge}) and (\ref{hedgeerror}), in the present framework it makes sense to formulate $\zeta$ by an abstract form
\begin{equation}\label{specZeta}
    \zeta(t,T) = \gamma(t,T)\left(\bar{\sigma}^2(t,T) - \sigma^2(t)\right),
\end{equation}
where $\sigma$ is the volatility of the underlying asset, $\bar{\sigma}$ the volatility of a risky asset in a parallel market.
We propose that $\zeta$ takes a general form of an exponential family
\begin{equation}
    \zeta = e^{\xi(x) + \eta(\theta)\phi(x) - \psi(\theta)},
\end{equation}
the parameter $\theta = \{\sigma, \bar{\sigma}\}$ and $X(t) = x$. As a result, (\ref{specZeta}) is a special case.\\

\textit{Remark.} While the diffusion term $\sigma$ accounts for the distributional property of the primary asset price, the exogenous term $\zeta$ accounts for a phenomenon such as volatility smile. The existence of intrinsic risk appears to undermine the true probability distribution of the underlying, however it emphasises its important role in determining the values of derivatives. It ensures maximal consistency in pricing and hedging contingent claims that are path-dependent/independent and particularly derivatives on volatility (such as variance swap, volatility swap). It insists on a realistic dynamics for the underlying asset as far as delta-hedge is concerned.

\subsection{Valuation of forward and futures contracts}
In practice, forward contracts are necessarily associated with the primary asset (such as stock and bond) and therefore their prices are determined by (\ref{Qvalformula}) and hedged by (\ref{Qhedgestrat}). As was illustrated in the previous section, can be determined by (\ref{Ovalformula}) which includes a measure of intrinsic risk, $\zeta$, as a convexity adjustment.

\subsection{Contingent claims on dividend paying assets with default risk}
Hedgers holding the primary asset in their hedging portfolio would receive dividends which are assumed to be a continuous stream of payments, whereas hedgers holding other hedge instruments (such as futures, vanilla options) do not receive dividends. In this case, $\zeta$ can be considered as dividend yield and $\zeta X dt$ is the amount of dividend received in a time interval $dt$. $\zeta$ may also be a non-negative function representing the hazard rate of default in a time interval $dt$, this well-known approach was proposed in \cite{lin06} and references therein.

\subsection{Foreign market derivatives}
Suppose that $r_f$ is the risk-free rate of return of a foreign money account and $\zeta_v$ the measure of risk that accounts for volatility smile, (\ref{risksubjectivePDE}) is then a direct application to foreign market derivatives where $\zeta = r_f + \zeta_v$. This is indeed the simplest application of risk-subjective valuation.

\subsection{Interest rate derivatives}
As an exogenous variable to the risk-subjective price process (\ref{XOsde}), $\zeta$ of a particular form would become a mean of reversion.  This is a desirable feature in a number of well-known interest rate models such as extended model of \cite{hw90}, \cite{bk91} model.

With reference to the \emph{liquidity preference theory} or the \emph{preferred habitat theory} of \cite{key64}, a term premium for a bond can be represented as a measure of intrinsic risk.

\section{Concluding remarks}
It is well-known among both academic and practitioners that the standard complete market framework often failed, see for example \cite{mp85}. Incomplete market framework becomes crucial in understanding and explaining well-known market anomalies. In this article we have introduced the notion of intrinsic risk and derived the risk-subjective measure $\mathbb{S}$ equivalent to the real-world measure $\mathbb{P}$, where $\mathbb{S}\in\mathcal{Q}$. At a conceptual level, the theory of Girsanov change of measure allows us to recognise that the crucial role of $\mathbb{S}$, rather than the expectation $E_{\mathbb{S}}[H]$, is assigned to the price of a derivative (such as futures, vanilla option). In addition, the intrinsic risk as a structure is what needed to be imposed on the mutual movements of the primary and derivative markets so that, at least, the pricing and hedging derivatives (such as swaps and caplets) can be undertaken on a consistent basis. Apart from such conceptual aspect, the measure $\mathbb{S}$ does not undermine the role of the measure $\mathbb{P}$ in that a lot of knowledge about the primary market is known at any given time $t$. More precisely, the market's expectation (predictions) in terms of a measure $\mathbb{S}$ at time $t$ is given by the conditional probability distribution
\begin{equation}
    \mathbb{S}\left[ \left. \cdot  \right| \mathcal{F}(t)   \right] \mbox{ on } \mathcal{F}_{\mathbb{S}}(t)
\end{equation}
where $\mathcal{F}(t)$ is the information available given by the primary market at time $t$, and $\mathcal{F}_{\mathbb{S}}(t)$ is the information generated by derivatives (such as vanilla options) with maturities $T > t$.

A final remark: In view of the last financial crises, the market has evolved and there is an apparent need, both among practitioners and in academia, to comprehend the problems caused by an excessive dependence on a specific asset modeling approach, by ambiguous specification of risks and/or by confusions between risks and uncertainties (volatilities). As a result, we presented a continuous time framework that, we believe, brings unity, simplicity and consistency to two important aspects: pricing with correctly specified model for a primary risky asset, and hedging risks that can be correctly understood and specified. In addition, the framework proposed in this article is rigorous in the sense that the true meanings of properties and relationships of intrinsic risk and volatility are self-consistent such that their values are not arbitrarily assigned nor should their properties be misused by ignorance.

\newpage

\end{document}